\documentclass[10pt,a4paper]{amsart}

\usepackage{amssymb}
\usepackage[final]{showkeys}
\usepackage{microtype}
\usepackage{color}
\usepackage{graphicx}
\usepackage[hmargin=3cm,vmargin={5cm,4cm}]{geometry}

\newtheorem{te}{Theorem}[section]

\newtheorem{os}[te]{Remark}

\numberwithin{equation}{section}

\allowdisplaybreaks

\begin{document}

\title[]{A note on generalized fractional diffusion equations on Poincar\'e half plane.}
\author{R. Garra}
\author{F. Maltese}
\author{E. Orsingher}

\begin{abstract}
In this paper we study generalized time-fractional diffusion equations on the Poincar\'e half plane $\mathbb{H}_2^+$. The 
time-fractional operators here considered are fractional derivatives of a function with respect to another function, that 
can be obtained by starting from the classical Caputo-derivatives essentially by means of a deterministic change of variable.
We obtain an explicit representation of the fundamental solution of the generalized-diffusion equation on $\mathbb{H}_2^+$ 
and provide a probabilistic interpretation related to the time-changed hyperbolic Brownian motion. 
We finally include an explicit result regarding the non-linear case admitting a separating variable solution.

\bigskip

\textit{Keywords:} Generalized time-fractional diffusion equation, Hyperbolic geometry, Hyperbolic Brownian motion. 
\end{abstract}
\maketitle

\section{Introduction}

In this paper we study generalized time-fractional diffusion equations on the hyperbolic 
Poincar\'e half-plane 
$$\mathbb{H}^{+}_2=\left\{ (x,y) \in \mathbb{R}^2\bigg| y>0\right\}.$$
The generalization here considered is based on the application of time-fractional 
derivatives of a function with respect to another function (see \cite{Almeida} for the definition and main properties), an interesting 
approach that permits us to take into account both time-varying coefficients and memory effects
(see e.g. \cite{aml} for a physical discussion about this).
In the previous paper \cite{Lao} the authors studied for the first time the time-fractional diffusion equation on the hyperbolic space involving the classical Caputo derivative.
Moreover, in the more recent paper \cite{mirko}, an interesting probabilistic interpretation of the fundamental solution of the time-fractional telegraph-type equation on hyperbolic spaces has been provided. In particular, a relevant connection with time-changed hyperbolic Brownian 
motions has been proved. \\
The main aim of this paper is to provide a rigorous analysis of the generalized time-fractional
diffusion equation on the hyperbolic space $\mathbb{H}^{+}_2$. We find an explicit representation of the fundamental solution by means of the method
of separation of variables.
Moreover, we obtain a probabilistic interpretation of the related stochastic process as a time-changed hyperbolic Brownian motion. 
In the first part of the paper we provide some necessary preliminaries about the Poincar\'e half-plane and the definition and basic properties of the fractional operators here considered.
We decided to provide detailed preliminaries about the Poincar\'e half-plane since many non-trivial computations are involved and we think that this short guide can be of help for the reader.\\
Then, we analyze the generalized time-fractional diffusion equation on $\mathbb{H}^{+}_2$ 
providing the representation of the fundamental solution and the related probabilistic meaning.
Finally, we also consider a nonlinear generalized time-fractional diffusion equation on $\mathbb{H}^{+}_2$ admitting a solution obtained 
by means of the method of separation of variables.\\
Few papers are devoted to the analysis of time-fractional diffusive equations on hyperbolic 
spaces, in our view, together with the previous papers \cite{mirko} and \cite{Lao}, this can be another step to develop this new and interesting topic. 

\section{Preliminaries}

\subsection{A short survey on hyperbolic geometry}

We here give some necessary mathematical preliminaries about the model of the Poincair\'e half-plane i.e the set 
$\mathbb{H}^{+}_2=\left\{ (x,y) \in \mathbb{R}^2|y>0\right\}$ 
with the following metric
\begin{equation}\label{0}
ds^2=\frac{dx^2+dy^2}{y^2}
\end{equation}

 First of all, in order to characterize the geometry of the Poincar\'e half-plane, we study the form of the geodesics, by using the variational principle. 

We consider the family of curves in the hyperbolic plane passing through two given points $(x_1,y_1) $ and $(x_2,y_2)$   in their parametric representation i.e.

\begin{equation}\label{1}
\gamma=\left\{ (x(t),y(t))\bigg| t_1  \leq t \leq t_2 \right\},
\end{equation}

where $t_1$ and  $t_2$ are such that $(x(t_1),y(t_1))=(x_1,y_1) $ and $(x(t_2),y(t_2))=(x_2,y_2)$.

The length of this curve in the hyperbolic plane is
\begin{equation}\label{2}
\mathcal{L} (\gamma)=\int_{t_1}^{t_2} \frac {\sqrt{x'(t)^{2}+y'(t)^{2} }}{y(t)}dt.
 \end{equation}

We can simplify this expression by restricting ourselves to the family of parametric curves  of \eqref{1} to curves with Cartesian parameterization i.e

\begin{equation}
\gamma=\left\{ (x,y(x)) | x_1  \leq x \leq x_2 \right\}.
\end{equation}

In this case, the integral \eqref{2} becomes 
\begin{equation}\label{3}
\mathcal{L} (\gamma)=\int_{x_1}^{x_2}{ \frac {\sqrt{1+y'(x)^{2} }}{y(x)}dx}.
\end{equation}

We can consider an arbitrary function $w(x)$ as $w(x)=y(x)+\epsilon h(x)$ with $\epsilon \geq 0$ and $h(x)$ is such that $h(x_1)=h(x_2)=0$ .

So the length of curve $(x,w(x))$ applying \eqref{3} becomes
\begin{equation}
\mathcal{L} (\gamma)=l(\epsilon)=\int_{x_1}^{x_2}{ \frac {\sqrt{1+(y'(x)+\epsilon h'(x))^{2} }}{y(x)+\epsilon h(x)}\,dx}.
\end{equation}

The geodesic curve is associated with a minimum point with respect to the $\epsilon$ variable of the function $l(\epsilon)$ for which is satisfied the condition
\begin{equation}\label{40}
\frac {dl}{d\epsilon}\bigg|_{\epsilon=0}=0.
\end{equation}
 
By direct computation we have that 

\begin{align}	
\nonumber &\frac {dl}{d\epsilon}\bigg|_{\epsilon=0}=
 \int_{x_1}^{x_2}{\frac {d}{d\epsilon}\left( \frac {\sqrt{1+(y'+\epsilon h')^{2} }}{y+\epsilon h}\right)|_{\epsilon=0}dx}=\\
\nonumber &= \int_{x_1}^{x_2}{\left(- \frac {h\sqrt{1+(y'+\epsilon h')^{2} }}{(y+\epsilon h)^2}+\frac {h'(y'+\epsilon h')}{(y+\epsilon h)\sqrt{1+(y'+\epsilon h')^2}}\right)|_{\epsilon=0}dx}=\\
\nonumber & =\int_{x_1}^{x_2}\left({- \frac {h\sqrt{1+y'^2 }}{y^2}+\frac {h'y'}{y\sqrt{1+y'^2}} }\right) dx
=\int_{x_1}^{x_2}- \frac {h\sqrt{1+y'^2 }}{y^2}dx+\left[\frac {hy'}{y\sqrt{1+y'^2}}\right]_{x_1}^{x_2}\\
\nonumber & -\int_{x_1}^{x_2}h \frac {d}{dx}\frac{y'}{y\sqrt{1+y'^2}}dx \\ 
\nonumber &\mbox{(since the function $h(x)$ is such that $h(x_1)=h(x_2)=0$)}\\
 &=\int_{x_1}^{x_2}\left(- \frac {\sqrt{1+y'^2 }}{y^2}- \frac {d}{dx}\frac{y'}{y\sqrt{1+y'^2}}\right)h dx \label{4}
\end{align}

By  \eqref{40}, the integral \eqref{4} must be equal to zero for all
functions $h$ and therefore we have that \\

\begin{equation}
\nonumber - \frac {\sqrt{1+y'^2 }}{y^2}- \frac {d}{dx}\frac{y'}{y\sqrt{1+y'^2}}=0
\end{equation}
and therefore
 \begin{equation}
- \frac{1 }{y^2\sqrt{1+y'^2 }}- \frac{y''}{y\left(\sqrt{1+y'^2}\right)^3}=0.
\end{equation}

We finally obtain that
 \begin{equation} \label{6}
1+  \frac{d }{dx}(yy')=0.
\end{equation}
   Integrating twice \eqref{6}  we have the following equation     
     \begin{equation}
x^2 +y^2-2cx-2d=0
 \end{equation}
 which is in $\mathbb{H}^{+}_2$  the equation of the semi-circles with an arbitrary center on the x-axis with arbitrary radius. \\
 Other geodesic curves  in the Poincair\'e half-plane are the half-lines parallel to the y-axis, that emerge if $x_1=x_2$. 
  
 \bigskip
 
 In order to derive the expression of the Laplacian in $\mathbb{H}^{+}_2$, let us introduce the hyperbolic coordinates. First of all, we need introduce the geometric center of the upper Poincar\'e half-plane which is the point $(0,1)$ and consider an arbitrary point of Cartesian coordinates $ (x,y) $ in $\mathbb{H}^{+}_2$ .
 The hyperbolic coordinates associated with this point are $(\eta , \alpha)$  where $\eta$ is the hyperbolic distance between $(0,1)$ and $(x,y)$, i.e., the length according to the metric \eqref{0} of the arc of geodesic that passes through $(0,1)$ and $ (x,y)$, which is an arc of semi-circumference if $(x,y)$ is not on the y-axis.
 While $\alpha$ is the angle formed by the tangent to that semi-circumference in $(0,1)$ and passing through $(x,y)$. \\
 As hyperbolic coordinates are defined, important relationships can be obtained from the transition to hyperbolic coordinates to the Cartesian coordinates, for example from the $\alpha $ coordinate to the Cartesian coordinates we can get a relationship that starts from the equation of the geodesic that passes through the origin $(0,1)$ and the point $(x,y)$
 
     \begin{equation}
(x-\tan \alpha)^2 + y^2=\frac{1}{\cos^2  \alpha} 
   \end{equation}
and therefore 
\begin{equation}\label{7}
 \tan \alpha =\frac{x^2+y^2-1}{2 x}. 
 \end{equation}
  
  Finally, the relationship between the $\eta$ coordinate and the Cartesian coordinates
 is given by  \label{8}
  \begin{equation}
  \cosh \eta =\frac{x^2+y^2+1}{2 y}. 
  \end{equation}

  Starting from the relations \eqref{7} and \eqref{8} we can get the change of coordinates from cartesian coordinates $(x,y)$ to hyperbolic coordinates $(\eta,\alpha)$ in $\mathbb{H}^{+}_2$    
  \begin{equation}\label{coo}
  \begin{cases}
  x =\displaystyle\frac{\cos \alpha \sinh \eta}{\cosh \eta - \sinh \eta \sin \alpha}, \quad \eta >0, \ 0< \alpha<2\pi \\                   
   y=\displaystyle\frac{1}{\cosh \eta - \sinh \eta \sin \alpha}
  \end{cases}
  \end{equation}
  
  We are now able to derive the expression of the Laplacian operator in hyperbolic coordinates.

  First of all, we observe that the Poincar\'e upper half-plane is a Riemannian manifold with the following metric tensor
  
  $$ \textbf{g}=  \left( \begin{array}{cc}
  
  \frac{1}{y^2} & 0 \\
  0 & \frac{1}{y^2} 
  \end{array} \right),  \\\\  $$
  
  In general, on a Riemannian manifold with a metric tensor  \textbf{g}, the Laplacian is
  given by
  
   \begin{equation}
  \Delta f= \frac{1}{\sqrt{|\textbf{g}|}} \sum_{i=1}^n \partial_{i}(\sqrt{|\textbf{g}|} \sum_{j=1}^n g^{ij} \partial_j f),
  \end{equation}
  
  where $|\textbf{g}|$  is the determinant of the metric tensor and the elements $g^{ij}$ are the components of the inverse matrix of \textbf{g} and $n$ is the dimension of the manifold.\
  
 By observing that the inverse matrix  $\textbf{g}^{-1}$ and $|\textbf{g}|$  are respectively given by
  
   \begin{equation}
   \textbf{g}^{-1}=  \left( \begin{array}{cc}     
  y^2 & 0 \\
  0 & y^2 
  \end{array} \right), \quad |\textbf{g}|=\frac{1}{y^4},
  \end{equation}
  we have that in this case $\Delta f$ becomes
  
   \begin{align}
   \nonumber &\Delta f= \frac{1}{\sqrt{\frac{1}{y^4} }} \left[ \frac{\partial}{\partial x} \left (\sqrt{\frac{1}{y^4}} y^2  \frac{\partial}{\partial x} f\right)+ \frac{\partial}{\partial y} \left (\sqrt{\frac{1}{y^4}} y^2  \frac{\partial}{\partial y} f\right)\right] =\\
    &= y^2 \left[ \frac{\partial}{\partial x}\left ( \frac{\partial}{\partial x} f\right)+ \frac{\partial}{\partial y}\left( \frac{\partial}{\partial y} f\right)\right]= y^2 \left( \frac{\partial^2}{\partial x^2}+ \frac{\partial^2}{\partial y^2}\right)f, \quad x \in \mathbb{R}, y>0 \label{coos}
  \end{align}
  
  By applying \eqref{coo} and \eqref{coos} we can derive the expression of the
  Laplacian in hyperbolic coordinates (see \cite{Lao} Theorem 2.1 for detailed calculations) that is given by the differential operator
  \begin{equation}
  \frac{1 }{\sinh\eta} \frac{\partial }{\partial \eta}\left(\sinh\eta\frac{\partial }{\partial \eta}\right)+ \frac{1 }{\sinh^2\eta}\frac{\partial^2 }{\partial \alpha^2}, \quad 0<\alpha<2\pi, \ \eta > 0.
  \end{equation}

\subsection{Fractional derivatives of a function with respect to another function}

Fractional derivatives of a function with respect to another function have been considered in the classical monograph by Kilbas et al. \cite{kilbas} (Section 2.5) and recently studied by Almeida in \cite{Almeida} that has provided the Caputo-type regularization of the existing definition and some interesting properties. Starting from this paper, this topic has gained interest both for mathematical reasons (see e.g. \cite{new}) and for physical applications (e.g. \cite{aml} and
the references therein). 
The utility of these generalized fractional operators in the applications is represented 
by the fact that they are essentially obtained by a deterministic time-change and permits us to 
take into account both time-variable coefficients and memory effects. Moreover, this class of operators include as special cases classical well-known time-fractional derivatives (for example, fractional derivatives in the sense of Hadamard, or Erd\'elyi-Kober).\\
Here we recall the basic definitions and properties for the reader's convenience.

Let $\nu>0$, $f\in C^1([a,t])$ an increasing function such that $f'(t)\neq 0$ in $[a,t]$, the fractional integral of a function $g(t)$ 
with respect to another function $f(t)$ is given by 
\begin{equation}
\left(I^{\nu,f}_{a^+}g\right)(t):=\frac{1}{\Gamma(\nu)}\int_a^t f'(\tau)
(f(t)-f(\tau))^{\nu-1}g(\tau)d\tau.
\end{equation}
Observe that for $f(t) = t^{\beta}$ we recover the definition of Erd\'elyi-Kober fractional integral recently applied, for example, in connection with the Generalized Grey Brownian Motion \cite{gianni}. For simplicity hereafter we will consider $a = 0$ (as usual) and suitable functions 
$f$ such that $f(0) = 0$. All the results can be simply generalized.

The corresponding Caputo-type evolution operator (see \cite{Almeida}) for $0<\nu<1$ is given by
\begin{align}\label{2.1}
\left(\mathcal{O}^{\nu, f}g\right)(t):=&\frac{1}{\Gamma(1-\nu)}\int_0^t 
(f(t)-f(\tau))^{-\nu}\frac{d}{d\tau}g(\tau)d\tau\\
& = I_{0^+}^{1-\nu,f}\left(\frac{1}{f'(t)}\frac{d}{dt}\right) g(t).
\end{align}
For the general case $\nu \in \mathbb{R}$ we refer to \cite{Almeida}. In this paper we are interested to the case $0<\nu<1$ interpolating
as a limit case the ordinary first order derivative, while the higher order cases can be treated in a similar way. We have used the symbol $\mathcal{O}^{\nu, f}(\cdot)$ in order to underline the generic integro-differential nature of the time-evolution operator, depending on the choice of the function $f(t)$ and the real order $\nu$.

A relevant property of the operator (\ref{2.1}) is that if $g(t) =(f(t))^{\beta-1}$ with $\beta>1$, then (see Lemma 1 of \cite{Almeida}) 
\begin{equation}\label{2.2}
\left(\mathcal{O}^{\nu, f} g\right)(t) = \frac{\Gamma(\beta)}{\Gamma(\beta-\nu)}(f(t))^{\beta-\nu-1}.
\end{equation}
Indeed, by direct calculation we have that
\begin{equation}
\nonumber \left(\mathcal{O}^{\nu, f} f^{\beta-1}\right)(t) = \frac{\beta-1}{\Gamma(1-\nu)}\int_{0}^{t}(f(t)-f(\tau))^{-\nu} f'(\tau)(f(\tau))^{\beta-2}d\tau
\end{equation}
and taking $y = f(\tau)/f(t)$ we have that
\begin{align}
\nonumber \left(\mathcal{O}^{\nu, f} f^{\beta-1}\right)(t)& = \frac{\beta-1 \ f^{\beta-1-\nu}}{\Gamma(1-\nu)}\int_{0}^{1}(1-y)^{-\nu} y^{\beta-2}dy\\
\nonumber &= \frac{\Gamma(\beta) \ f^{\beta-1-\nu}(t)}{\Gamma(\beta-1)\Gamma(1-\nu)}\frac{\Gamma(1-\nu)\Gamma(\beta-1)}{\Gamma(\beta-\nu)}\\
\nonumber & = \frac{\Gamma(\beta)}{\Gamma(\beta-\nu)}(f(t))^{\beta-\nu-1}.
\end{align}
Therefore, the composite Mittag-Leffler function 
\begin{equation}
g(t)= E_\nu(\lambda(f(t))^\nu)
\end{equation}
is an eigenfunction of the operator $\mathcal{O}^{\nu, f}$, when $\nu\in(0,1)$ and $f$ is a well-behaved function such that $f(0)=0$. This means that
\begin{equation}
\mathcal{O}^{\nu, f}E_\nu(\lambda(f(t))^\nu) = \lambda E_\nu(\lambda(f(t))^\nu). 
\end{equation}

\section{Generalized linear and nonlinear fractional diffusion on Poincar\'e half-plane}

\subsection{The linear case}

In a previous paper \cite{Lao}, the authors considered the following diffusion-type equation on $\mathbb{H}_2^+$
\begin{equation}\label{lao}
\frac{\partial^\beta}{\partial t^\beta} u(\eta,t) =  \frac{1}{\sinh\eta}\left(\frac{\partial}{\partial \eta}\sinh\eta \frac{\partial}{\partial \eta}\right)u(\eta,t), \quad \beta \in (0,1), 
\end{equation}
where $\frac{\partial^\beta}{\partial t^\beta}$ is a fractional derivative of order $\beta$ in the sense of Caputo. We here analyze the more general case involving the fractional derivative w.r.t. another function. First of all, we have the following result
\begin{te}
Let be $f\in L^1[0,t]$ such that $f(0)=0$, the fundamental solution for the generalized time-fractional diffusion equation 
\begin{equation}\label{dome}
\left(\mathcal{O}^{\beta, f} u\right)(\eta,t) = \frac{1}{\sinh\eta}\left(\frac{\partial}{\partial \eta}\sinh\eta \frac{\partial}{\partial \eta}\right)u(\eta,t) 
\end{equation}
is given by 
\begin{equation}
u(\eta,t) = \frac{2}{\pi}\int_0^\infty xE_{\beta}\left(-\frac{f(t)^\beta}{4}-x^2f(t)^\beta\right)dx\int_\eta^\infty d\varphi\frac{\sin(x\varphi)}{\sqrt{2\cosh\varphi-2\cosh\eta}}.
\end{equation}
\end{te}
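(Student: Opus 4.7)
The plan is to attack \eqref{dome} by separation of variables, seeking a factorized solution $u(\eta,t) = T(t) H(\eta)$. Substituting into \eqref{dome} and dividing by $u$ yields, for some separation constant $\lambda \geq 0$,
\begin{equation*}
(\mathcal{O}^{\beta,f} T)(t) = -\lambda\, T(t), \qquad \frac{1}{\sinh\eta}\frac{d}{d\eta}\!\left(\sinh\eta\, \frac{dH}{d\eta}\right) = -\lambda\, H(\eta).
\end{equation*}
For the time equation, the computation performed around \eqref{2.2} together with the eigenfunction property $\mathcal{O}^{\beta,f} E_\beta(\lambda (f(t))^\beta) = \lambda E_\beta(\lambda (f(t))^\beta)$ stated at the end of Section~2 immediately yield $T(t) = E_\beta(-\lambda (f(t))^\beta)$ (up to a normalization absorbed later).

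For the spatial eigenvalue problem, which is the radial part of the hyperbolic Laplacian on $\mathbb{H}_2^+$, I would parametrize the spectrum as $\lambda = \tfrac{1}{4} + x^2$ with $x\geq 0$, the shift $1/4$ being the bottom of the $L^2$-spectrum on $\mathbb{H}_2^+$. The bounded eigenfunctions admit the Mehler-type representation
\begin{equation*}
H_x(\eta) = \int_\eta^\infty \frac{\sin(x\varphi)}{\sqrt{2\cosh\varphi - 2\cosh\eta}}\,d\varphi,
\end{equation*}
which I would verify by direct substitution: after differentiating twice under the integral (with the square-root singularity at $\varphi=\eta$ controlled by a preliminary regularization) and one integration by parts that transfers a derivative onto $\sin(x\varphi)$, elementary hyperbolic identities reduce the result to $\frac{1}{\sinh\eta}(\sinh\eta\, H_x')' = -(x^2 + \tfrac{1}{4}) H_x$.

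With the separated building blocks at hand, the fundamental solution is reconstructed by superposition over the continuous spectrum,
\begin{equation*}
u(\eta,t) = \int_0^\infty c(x)\, E_\beta\!\left(-\Big(\frac{1}{4}+x^2\Big)(f(t))^\beta\right) H_x(\eta)\, dx,
\end{equation*}
and the spectral weight $c(x) = \tfrac{2x}{\pi}$ is dictated by the initial datum: at $t=0$ we have $E_\beta(0)=1$, so the representation collapses to the Mehler--Fock inversion of the delta concentrated at the pole $(0,1)$ of $\mathbb{H}_2^+$, whose normalization on the half line is precisely $\frac{2}{\pi}x\,dx$. The main technical obstacle I anticipate is not the time part, which is handled by the Mittag-Leffler eigenfunction property, but rather the rigorous verification that $H_x$ solves the radial eigenvalue equation with eigenvalue $x^2 + 1/4$: the integrable square-root singularity at the lower endpoint of the $\varphi$-integral obstructs naive differentiation and must be dealt with by splitting the integration domain or passing through a regularization and then taking the limit. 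Once that eigenfunction identity is established, collecting the pieces gives the stated formula.
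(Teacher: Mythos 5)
Your proposal follows essentially the same route as the paper: separation of variables, the Mittag-Leffler eigenfunction property for the time factor $T(t)=E_\beta(-\lambda (f(t))^\beta)$, and the classical radial eigenvalue problem for the hyperbolic Laplacian with spectral parameter $\lambda=\tfrac14+x^2$ and weight $\tfrac{2x}{\pi}$. The only difference is that you sketch the verification of the Mehler-type spatial eigenfunctions and the spectral normalization directly, whereas the paper delegates that entire spatial analysis to the earlier work of Lao and Orsingher.
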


\begin{proof}
We find the solution by means of the separation of variables and transform the Laplacian operator by using the change of variable
$y = \cosh \eta$ which leads to  
$$u(y,t) = F(y)\cdot T(t)$$
and therefore we get
\begin{align}
& \left(\mathcal{O}^{\beta, f} T\right) = -\omega T,\\
&(y^2-1)F''+2y F'+\omega F = 0.
\end{align}
The solution of the first equation is given by 
\begin{equation}
T(t, \omega)= E_{\beta, 1}(-\omega f(t)^\beta).
\end{equation}
The spatial part of the solution remains the same as in the classical hyperbolic diffusion equation and we refer to \cite{Lao} for the details.
\end{proof}

\begin{os}
Observe that for $f(t) = t$ and $\beta = 1$ we recover the transition function of the hyperbolic Brownian motion, firstly studied by Gertsenshtein and Vasiliev in \cite{gert}.\\
Moreover, for $f(t)= t$ and $\beta \in (0,1)$ we recover the results obtained in \cite{Lao}.
\end{os}

Let us introduce the process 
$$T^\beta(f(t))= B^{hp}(\mathcal{L}^\beta(f(t))),$$
where $B^{hp}$ is the hyperbolic Brownian motion in $\mathbb{H}_2^+$ independent from 
$\mathcal{L}^\beta(t)$ which is the inverse of the stable subordinator $H^\beta(t)$, that is
$$\mathcal{L}^\beta(t)= \inf\{s>0: H^\beta(s)\geq t\}, \quad \beta \in (0,1).$$
We have the following 
\begin{te}
The distribution $p(x,t)$ of the process $T^\beta(f(t))$ coincides with the fundamental solution
of the equation \eqref{dome}.
\end{te}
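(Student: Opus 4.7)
The plan is to identify the law of $T^\beta(f(t))$ with the explicit integral furnished by the previous theorem, by conditioning on the time change and then exploiting the Laplace transform of the inverse stable subordinator.

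By the independence of $B^{hp}$ and $\mathcal{L}^\beta$, and the fact that $f$ is a deterministic function of time, the density of $T^\beta(f(t))$ admits the representation
\begin{equation}
p(\eta,t)=\int_0^\infty q^{hp}(\eta,s)\,\ell^\beta_{f(t)}(s)\,ds,
\end{equation}
where $q^{hp}(\eta,\cdot)$ denotes the transition density of the hyperbolic Brownian motion started at the origin and $\ell^\beta_\tau$ denotes the density of $\mathcal{L}^\beta(\tau)$. I would then insert the spectral representation of $q^{hp}$ on $\mathbb{H}_2^+$, which is the same one used implicitly in the proof of the previous theorem and is due to Gertsenshtein and Vasiliev (see \cite{gert,Lao}):
\begin{equation}
q^{hp}(\eta,s)=\frac{2}{\pi}\int_0^\infty x\,e^{-(x^2+1/4)s}\int_\eta^\infty\frac{\sin(x\varphi)}{\sqrt{2\cosh\varphi-2\cosh\eta}}\,d\varphi\,dx.
\end{equation}
The exponent $-(x^2+1/4)$ is precisely the eigenvalue of the radial hyperbolic Laplacian acting on the inner spatial integral, and it will match perfectly with the outer Mittag-Leffler factor arising from the subordination.

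The second step is to substitute the spectral expansion into the conditioning formula, interchange the order of integration (justified by Tonelli on a non-negative integrand), and apply the classical Laplace transform identity
\begin{equation}
\int_0^\infty e^{-\lambda s}\,\ell^\beta_\tau(s)\,ds=E_\beta(-\lambda\tau^\beta),\qquad \lambda\geq 0,\ \tau\geq 0,
\end{equation}
with $\lambda=x^2+1/4$ and $\tau=f(t)$. The exponential factor $e^{-(x^2+1/4)s}$ is thus replaced by $E_\beta\!\left(-\tfrac{f(t)^\beta}{4}-x^2 f(t)^\beta\right)$, and what remains is exactly the expression for the fundamental solution displayed in the previous theorem.

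I expect the main obstacle to be twofold: first, a careful reference to and/or rederivation of the spectral density $q^{hp}$ in the normalization used here (the $e^{-s/4}$ factor is often absorbed into a shifted hyperbolic Laplacian, and the kernel $\sin(x\varphi)/\sqrt{2\cosh\varphi-2\cosh\eta}$ needs to be recognized as the relevant radial transform); and second, the justification of the interchange of the triple integral, in particular near $s=0$ where $\ell^\beta_{f(t)}(s)$ may be heavily concentrated, and for large $x$ where the Mittag-Leffler decay is only polynomial. Once these points are dispatched, the identification is immediate; alternatively one can cross-check by applying $\mathcal{O}^{\beta,f}$ and the radial hyperbolic Laplacian termwise under the integral sign, using the eigenfunction property \eqref{2.2} established in the preliminaries to verify that $p(\eta,t)$ indeed solves \eqref{dome} with the correct initial datum inherited from $f(0)=0$ and $\ell^\beta_0=\delta$.
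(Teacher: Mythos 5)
Your argument is correct, and it is substantially more explicit than the one given in the paper. The paper's proof is a two-line sketch: it notes that the deterministic substitution $f(t)\to t$ reduces \eqref{dome} to the Caputo-type equation \eqref{lao}, and then invokes the time-Laplace transform method (as in \cite{Lao,mirko}) to identify the fundamental solution of \eqref{lao} with the law of the subordinated process. You instead work forwards from the process: you condition on the time change to get $p(\eta,t)=\int_0^\infty q^{hp}(\eta,s)\,\ell^\beta_{f(t)}(s)\,ds$, insert the Gertsenshtein--Vasiliev spectral representation of $q^{hp}$ with eigenvalue $-(x^2+1/4)$, and use $\int_0^\infty e^{-\lambda s}\ell^\beta_\tau(s)\,ds=E_\beta(-\lambda\tau^\beta)$ with $\tau=f(t)$ to land exactly on the integral formula of the preceding theorem. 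This buys you a direct, self-contained identification with the explicit fundamental solution (rather than a match of Laplace transforms in $t$), and it makes transparent where the deterministic time change enters --- only as the argument of the Mittag-Leffler factor. The paper's route is shorter because it outsources the analytic content to the known Caputo case; yours carries the normalization details (the generator being the full hyperbolic Laplacian, hence the $1/4$ shift) and the Fubini/Tonelli justification explicitly, which you correctly flag as the only points needing care. One small slip: the eigenfunction property of $E_\beta(\lambda f(t)^\beta)$ under $\mathcal{O}^{\beta,f}$ is the unnumbered identity following \eqref{2.2}, not \eqref{2.2} itself, which is the power-function rule from which it is derived.
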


\begin{proof}
We observe that, by means of the deterministic time-change $f(t)\rightarrow t$, we can essentially go back to a time-fractional diffusion equation involving the Caputo derivative.
Then, by means of the time-Laplace transform method, it can be proved that the fundamental solution of \eqref{lao} coincides with the distribution of the process $T^\beta(f(t))$.
\end{proof}

Observe that this paper is devoted to diffusive models in the Poincar\'e half-space 
$\mathbb{H}_2^+$ but the generalizations to $\mathbb{H}_n^+$ can be obtained in a similar way 
from the probabilistic point of view and will be the object of a further detailed analysis.\\

Finally, by means of similar methods, we can generalize the recent results obtained in \cite{mirko} about time-fractional telegraph-type equations in $\mathbb{H}_n$. In particular, we have that

\begin{te}
The distribution of the composition 
\begin{equation}
\mathcal{T}^\beta(t)= B^{hp}(L^\beta(f(t))),
\end{equation}
where $$L^\beta(t) = \inf\{s>0: H_1^{2\beta}(s)+(2\lambda)^{1/\beta}H_2^\beta(s)\geq t\},$$
and $H_1^{2\beta}$, $H_2^\beta$ are independent stable subordinators (with $\beta\in(0,1/2)$, coincides with the fundamental solution of the equation
\begin{equation}\label{domes}
\left(\mathcal{O}^{2\beta, f} u\right)(\eta,t)+2\lambda\left(\mathcal{O}^{\beta, f} u\right)(\eta,t) = \frac{1}{\sinh\eta}\left(\frac{\partial}{\partial \eta}\sinh\eta \frac{\partial}{\partial \eta}\right)u(\eta,t), \quad \beta \in (0,1/2).
\end{equation}
\end{te}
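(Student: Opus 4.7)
The plan is to follow the two-step strategy already employed for Theorem~3.2. First, the deterministic time-change $\tau=f(t)$ converts every generalized operator $\mathcal{O}^{\nu,f}$ into the ordinary Caputo derivative $\partial^{\nu}/\partial\tau^{\nu}$, so that \eqref{domes} reduces to a time-fractional telegraph-type equation on $\mathbb{H}_2^+$ with constant Caputo derivatives of orders $2\beta$ and $\beta$. It is therefore enough to prove the statement for $f(t)=t$ and then transport the conclusion back via the same change of variable, exactly as was done for Theorem~3.2.

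For the classical case I would apply separation of variables in the hyperbolic coordinate $\eta$. The spatial part is identical to the one occurring in \eqref{dome}, so the eigenfunctions and the spectral variable $-\omega$ are exactly those used in the proof of Theorem~3.1. For a fixed spectral mode the temporal factor $T(\tau,\omega)$ must satisfy
\begin{equation*}
\frac{\partial^{2\beta}}{\partial\tau^{2\beta}}T+2\lambda\frac{\partial^{\beta}}{\partial\tau^{\beta}}T=-\omega T,\qquad T(0,\omega)=1,\ T'(0,\omega)=0,
\end{equation*}
and the usual Laplace-transform rules for the Caputo derivative yield
\begin{equation*}
\widetilde{T}(s,\omega)=\frac{s^{2\beta-1}+2\lambda\,s^{\beta-1}}{s^{2\beta}+2\lambda\,s^{\beta}+\omega}.
\end{equation*}

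On the probabilistic side I would compute the Laplace transform in $\tau$ of the density of $L^{\beta}(\tau)$. The key observation is that $H_1^{2\beta}(s)+(2\lambda)^{1/\beta}H_2^{\beta}(s)$ is a sum of two independent stable subordinators and is therefore itself a subordinator, whose Laplace exponent is
\begin{equation*}
\Phi(s)=s^{2\beta}+2\lambda\,s^{\beta},
\end{equation*}
the scaling $(2\lambda)^{1/\beta}$ being precisely what produces the coefficient $2\lambda$ in front of $s^{\beta}$. For the inverse $L^{\beta}$ of such a subordinator the classical identity
\begin{equation*}
\int_0^{\infty}e^{-s\tau}\,\mathbb{E}\bigl[e^{-\omega L^{\beta}(\tau)}\bigr]\,d\tau=\frac{\Phi(s)}{s\bigl(\Phi(s)+\omega\bigr)}=\widetilde{T}(s,\omega)
\end{equation*}
holds. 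Combining this with Bochner's subordination formula for the hyperbolic Brownian motion (whose transition density, expanded along the eigenfunctions of the hyperbolic Laplacian, evolves as $e^{-\omega t}$ on the $\omega$-mode), one sees that the Laplace transform in $\tau$ of the density of $B^{hp}(L^{\beta}(\tau))$ agrees, mode by mode, with $\widetilde{T}(s,\omega)$. Integrating against the spectral measure and inverting in $s$ identifies the distribution of $B^{hp}(L^{\beta}(\tau))$ with the fundamental solution of \eqref{domes} in the case $f(t)=t$.

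The main obstacle is the clean algebraic matching between the Laplace exponent $\Phi(s)$ of the driving subordinator and the characteristic polynomial $s^{2\beta}+2\lambda s^{\beta}+\omega$ of the temporal ODE: this is exactly where the unusual scaling $(2\lambda)^{1/\beta}$, rather than $2\lambda$, enters the definition of $L^{\beta}$, and where the assumption $\beta\in(0,1/2)$ becomes essential, since it guarantees $2\beta<1$ and therefore the stable character of $H_1^{2\beta}$. Once this matching is settled, the spatial identification with the eigenfunctions of the hyperbolic Laplacian is inherited from Theorem~3.1 and the return to the generalized time-fractional operator $\mathcal{O}^{\nu,f}$ is immediate by inverting the change of variable $\tau=f(t)$.
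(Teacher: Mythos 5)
Your proposal is correct and follows exactly the route the paper intends: the deterministic change of variable $\tau=f(t)$ reduces \eqref{domes} to the Caputo telegraph-type equation of \cite{mirko}, and the Laplace-transform matching $\widetilde{T}(s,\omega)=\Phi(s)/\bigl(s(\Phi(s)+\omega)\bigr)$ with $\Phi(s)=s^{2\beta}+2\lambda s^{\beta}$ is precisely the argument the paper delegates to the previous theorem and to \cite{mirko}; you have in fact supplied more detail than the paper does. The only superfluous item is the condition $T'(0,\omega)=0$, which is not needed since $2\beta<1$ means the Caputo derivative of order $2\beta$ requires only the initial value $T(0,\omega)=1$.
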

The main idea for the proof is essentially the same of the previous theorem. 
The result can be generalized to a multi-term fractional equation involving a finite number of fractional derivatives w.r.t. another function of order less than one (see \cite{bruno}).

\subsection{The nonlinear case.}

We recall that the construction of the explicit representation of the fundamental solution of the linear diffusion equation is based (also in the fractional case) on the classical method of separation of variables. Inspired by this, we observe that particular solutions for nonlinear equations 
can be constructed by the generalized method of separation of variables (see \cite{polyanin}).
Based on this idea, a final result on non-linear diffusive equation in $\mathbb{H}_2$ is here considered. 
\begin{te}
The generalized time-fractional nonlinear diffusive equation in $\mathbb{H}_2^+$
\begin{equation}\label{domen}
\left(\mathcal{O}^{\beta, f} u\right)(\eta,t) = \frac{1}{\sinh\eta}\left(\frac{\partial}{\partial \eta}\sinh\eta \frac{\partial}{\partial \eta}\right)u^n(\eta, t)-u(\eta,t), \quad n > 0 , (\eta,t)\in \mathbb{R}^+\times \mathbb{R}^+
\end{equation}
admits as a particular solution 
\begin{equation}
u(\eta, t) = g(\eta)\cdot E_{\beta}\left(-(f(t))^\beta\right),
\end{equation}
where $g(\eta)$ is such that 
$\frac{d g^n}{d\eta}=\frac{1}{\sinh \eta}.$
\end{te}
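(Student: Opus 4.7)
The plan is to verify the ansatz by direct substitution, using the eigenfunction property of the composite Mittag-Leffler function derived in Section 2.2. First I would set $u(\eta,t)=g(\eta)\,T(t)$ with $T(t)=E_\beta(-(f(t))^\beta)$, separating the $\eta$ and $t$ dependence completely. Since $\mathcal{O}^{\beta,f}$ acts only in time, I would apply the identity established earlier (with $\lambda=-1$),
\begin{equation}
\mathcal{O}^{\beta,f}E_\beta\bigl(-(f(t))^\beta\bigr)=-E_\beta\bigl(-(f(t))^\beta\bigr),\nonumber
\end{equation}
to conclude that the left-hand side of \eqref{domen} reduces to $-g(\eta)\,T(t)$.

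Next I would unfold the right-hand side. Because $u^n=g^n(\eta)\,T^n(t)$ and the time factor passes through the $\eta$-derivatives, the hyperbolic Laplacian term becomes $T^n(t)\,\frac{1}{\sinh\eta}\bigl(\sinh\eta\,(g^n)'(\eta)\bigr)'$, and the linear reaction term $-u$ contributes $-g(\eta)\,T(t)$. Matching both sides, the $-g(\eta)T(t)$ contributions cancel and what remains is
\begin{equation}
\frac{T^n(t)}{\sinh\eta}\,\bigl(\sinh\eta\,(g^n)'(\eta)\bigr)'=0.\nonumber
\end{equation}
Since $T^n(t)\not\equiv 0$, this forces $\sinh\eta\,(g^n)'(\eta)$ to be constant in $\eta$; normalizing that constant to $1$ yields exactly the prescribed condition $\frac{d g^n}{d\eta}=\frac{1}{\sinh\eta}$, which can be integrated in closed form as $g^n(\eta)=\ln\tanh(\eta/2)+\mathrm{const}$.

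The main obstacle is not analytic but structural: one has to observe that the nonlinear diffusion term, evaluated on a separable profile, produces a factor $T^n$ rather than $T$, so it could never balance the time derivative $\mathcal{O}^{\beta,f}u\sim T$ unless the spatial coefficient multiplying $T^n$ vanishes identically. The linear damping term $-u$ on the right is precisely what allows the time dependence to close on itself: it cancels the eigenvalue contribution from the Mittag-Leffler function, leaving a purely spatial (and homogeneous) condition on $g$. Once this cancellation mechanism is recognized, the verification is a straightforward calculation and the ODE for $g^n$ follows immediately, completing the proof.
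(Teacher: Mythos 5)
Your argument is correct and follows essentially the same route as the paper: separate variables, note that the condition $\frac{dg^n}{d\eta}=\frac{1}{\sinh\eta}$ annihilates the nonlinear hyperbolic-Laplacian term, and then use the eigenfunction property $\mathcal{O}^{\beta,f}E_\beta(-(f(t))^\beta)=-E_\beta(-(f(t))^\beta)$ to close the time dependence against the $-u$ term. Your added remarks (the explicit integration $g^n(\eta)=\ln\tanh(\eta/2)+\mathrm{const}$ and the structural explanation of why the $T^n$ factor must carry a vanishing spatial coefficient) are consistent with, and slightly more detailed than, the paper's proof.
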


\begin{proof}
We first search a solution by the generalized separation of variables in the simple form
$$u(\eta, t) = r(t)\cdot g(\eta).$$
We observe that if $g(\eta)$ is such that
$$\frac{d g^n}{d\eta}=\frac{1}{\sinh \eta},$$
 then 
$$\frac{1}{\sinh\eta}\left(\frac{\partial}{\partial \eta}\sinh\eta \frac{\partial}{\partial \eta}\right)u^n(\eta, t)=0$$
and therefore by substitution we have that
$$g(\eta)\left(\mathcal{O}^{\beta, f}r\right)(t)=-g(\eta)r(t)$$
and therefore $r(t) = E_{\beta}\left(-(f(t))^\beta\right)$. 
\end{proof}
The study of nonlinear diffusive equations in $\mathbb{H}_2^+$ is not the main object of this paper, but we observe that by starting from this simple result, it is possible to construct exact solutions for many different classes of generalized time-fractional nonlinear equations in 
$\mathbb{H}_2^+$, a completly new topic of research.

\end{document}